\documentclass[10pt,twocolumn,a4paper]{article}

\usepackage{graphicx,amsmath,amssymb,amsthm,mathabx,booktabs,txfonts,mathrsfs}
\date{}

\author{Yong Tan\\~\\
\emph{yongtan\_navigation@outlook.com}}
\title{A New Shortest Path Algorithm Generalized on Dynamic Graph for Commercial Intelligent Navigation for Transportation Management}

\theoremstyle{plain}
\newtheorem{prop}{Proposition}

\begin{document}
\maketitle 
\begin{abstract}Dynamic graph research is an essential subject in Computer Science. The shortest path problem is still the central in this field; moreover there is a variety of applications in practical projects. 

In this paper, we select the transportation activity as a paradigm to study. Of course there exists the most clear and appealing practical application over the intelligent navigation, also which is an important part in intelligent transportation system (ITS). Totally speaking, this activity is fairly simple and well-behavior under our consideration, for which, there is a lot of sophisticated theories and techniques to motivate our researches although this problem relates to interdisciplinary; so that we can be dedicated to computing aspect. Below, a host of practical problems will by the way be discussed in theory and empiricism including correctness, sampling, accuracy, compatibility, quick query and the application into other research; therein some have been scarcely argued intentionally or not in literatures before. Through these contents, we are able to have a smart system built which easy to incorporate other modules to realize a sophisticated industrial system.
\end{abstract}

\emph{Keywords:} Fastest Path Algorithm, Dynamic Graph, Intelligent Transportation System

\section{Introduction}
When we refer to a dynamic graph, at least, one can see that all the ingredients in network should be variable wherein includes every node, edge, weight and so forth. In general, people almost always refer all changes in this system to weight's change to simplify the control in order to reach the tractable; as it were, the weights totally conduct all activities by their value as well as realistic transportation: as our urban and the living traffic, the geographical information of urban may be often seen of relatively fixed through decades. The change is traffic state over time on every lane against that city and, everyone could be represented closely by a linear table whose members also change over time.

If this research over transportation arises into academic field, especially on Computer Science, it is well known for Cooke and Hasley in 1969\cite{1} firstly to propose the conception of \emph{Dynamic Graph}. Wherein, their contributions demonstrate these: (1) Define a dynamic graph, over there, every arc in graph is able to be connected with a weight table which is a discrete linear list used for reference, commonly called \emph{time-dependent} table. (2) There is likely to existence of a shortest path while given two nodes available as source or target in graph. (3) An adapted version of Bellman's method could be applied for the shortest path in such dynamic network.

In general, people say the shortest path is the \emph{fastest} path with the earliest \emph{arrival time} among all probable answers. This is our topic in this paper. In traffic engineering, there may be an important consensus for us is the realistic transportation is often presenting partially acyclic other than a totally chaos system. In light of this prerequisite, by the reference of those historical data over that road network, we may invent a virtual \emph{today} as being the predicted model to mimic the real today to make route oracle as guidance to conduct travelling. 

But people quickly found the Cooke's model is rather simple and needs to improve.

\subsection{Related Works}
For the sake of clarifying the correlation between the static and the dynamic, we briefly review the static heuristic method applied for the shortest paths, which kind is by now regarded as the highest efficiency in all analogues. The core idea was firstly proposed by Ford in 1956 reportedly\cite{2}, and the formula is as follows to characterize how to asymptotically make a path shorter, which is only used on a static graph.
\begin{align}\label{s1}
d(v)\coloneqq &d(u) + c(u,v)\nonumber\\
&\text{if}~d(v)< d(u) + c(u,v)~\text{or}~v~\text{not labeled.}\nonumber\\
&\text{otherwise no definition.}
\end{align}

Note that we have revised the original form, but no distortion if in contrast to Ford's original thought-way. In practical operation, we must get two variables recorded: the invariant $d(v)$ which represents the \emph{iterative} total length at node $v$; The dependent path $p$ is a queue consisting of many segments, whose piecewise from source to node $v$. Another may be node $u$, as soon as identified as available due to arc $(u,v)$ gets success to shorter path $p$, it will as unique predecessor to node $v$ and this arc should be maintained in a \emph{search tree} besides.

In the course of search, make node $v$ firstly assigned by function\eqref{s1}, we call \emph{labeling} method; on contrary to many times, we call \emph{correcting process}. Along with unceasingly correcting, the dependent variant $d(v)$ should be able to converge. As soon as all variants converge, upon that no necessary to be corrected, it means all target nodes thus obtain their respective optimality, due to this sufficient condition, the program thus terminates.

Once the static version the function\eqref{s1} adapted for dynamic instance, the way to get weights should be revised as follows in the nature of dynamic character, which reformation commenced from Cooke and Hasley\cite{1}.

\begin{align}\label{d1}
d(v)\coloneqq &d(u) + f(D, t,\ell): \ell=(u,v)\nonumber\\
&\text{if}~d(v)< d(u) + f(D, t,\ell)~\text{or}~v~\text{not labeled.}\nonumber\\
&\text{otherwise no definition.}
\end{align}

Is the function $f:\ell\rightarrow{R^{+}}$, defined by Cooke et al.; oneself called \emph{cost} function, which works over a discrete table $D=t_{0}, t_{0}+1,\ldots, t_{0}+M$ in which those weights thus settled. In that way, according to one table associated to one link, we have the dependent of traffic status and lane constituted. Meanwhile the parameter $t$ is only a positive integral for query in table $D$. And people generally say such table \emph{time-dependent} in order to emphasize variable \emph{weight} able to change over time, inversely, with departure time as being index, program could access the table for weight.

If there is an observation on two different versions of method, one can see they are possessed of \emph{non-decreasing} for iterative total-weight; besides, the different manner between them is the times of visiting table. The static is equivalent to precisely once beside which, in general, the dynamic maybe many. In fact, the more importance things are not in more than just to the times of visiting table, which pertinent to yielding weights.

Dean\cite{3} gave proofs for problem to be NP-hard can refer the premise to non-FIFO in time dependent model. The conclusion can be further referred to yielding weight is computed by cost function $f$ with the current state as necessary parameters, in term of functional definition\eqref{d1}. It is easy to comprehend that on account of search in table $D$ for that probable optimality, program would be only just in waiting until the enumeration for all items in table ends.

Apparently, Dean's work notices us that the Cookie and Halsey model is fairly vulnerable and susceptible of error.

In this aspect of transportation, a seminal work has been accomplished by Chabini and his students; they have done a lot of celebrated works especially to regulate the context of transportation problem. In \cite{4} , Chabini and Ganugapati intensively summarizes these premises for transportation problem: (1) \emph{Discrete} versus \emph{continuous}. (2) \emph{Fastest} versus \emph{minimum cost}. (3) \emph {One-to-all} versus \emph{all-to-one}. (4) \emph{Waiting allowable} versus \emph {unallowable}. (5) \emph{FIFO} (\emph{first in first out}) versus \emph{non-FIFO}.

Note that in practical problem or computing model, of those premises, however, some therein may be simultaneously co-existing in a system to comprise the context. As to transportation problem we study, the FIFO is generally a necessary condition that provides a chronological rule first-in-first-out: when two in a same order in a link and there is a delay between them besides, the latter never possibly overtakes the preceding one. If not so, the transportation problem should be NP-hard one, due to every latter perhaps is the optimal. In light of this clause, FIFO may regulate the \emph{waiting} at a node becomes a behavior to waste hours, whereas, sometimes, waiting is necessity such as delay for traffic signal administration, so we must have a consideration in treatment. 

Regarding algorithm, almost researchers argue a better way that to adapt the static wise for the dynamic circumstance. Dean even advises the search manner of One-to-All is the core algorithm basically for other ways\cite{3}. Chabini\cite{5}, Ahn and Shin\cite{6}, Dreyfus\cite{7} and Kaufman and Smith\cite{8} can be said to do a similar thing that proves applying adaptive version on dynamic instance is also suitable for shortest paths; as it does, to revise method for dynamic instance has been trivial issue unnecessarily in reiteration in this paper.

Turn to another dimension, listen to the voice that out from the industrial trades who work in engineering projects. For these demands they wish. Nannicini and Liberti summarizes those\cite{9} and wish the shortest path is the fastest; second, traffic data are updated at regular time interval; third, the computation on oracle have better to compute in real time and system can merely take a few millisecond to deal with million nodes.

Whereas in another research field, not that the fastest is the absolutely unique option for everybody the message we must need to capture. In \cite{10}, Gonzalez et al. study the security driving, such as that nobody wants driving through a high crime area. These mainly road navigation systems, up to date, however, do not supply any guideline to indicate security for drivers, they complained. To do that, by the wise of traffic data mining, they study a driving-security model in which, the data are squeezed into a discrete time table.

The two typical cases are telling us that for our computing model, is the best that the algorithmic framework should be possessed of compatibility, integrated and synergetic character besides; not only a simply powerful capacity of search.

\subsection{Our Contributions}
We outline a blueprint for intelligent transportation system via the way of generalizing our new algorithm over dynamic graphs. The significance is not only to introduce an algorithm which with an incredible performance and especially on dealing with large scale instance; but for some practical problems, which have not been mentioned or studied in previous literatures, we have resolved them in this paper with theoretical analysis. 

By our simulating experiments, we show off a prototype of empirical study to which we can clearly and easily measure the computing complexity and estimate the impacting factors with which, people must meet in practical operation. 

Finally, we give an instance of operation research to extend the application to urban logistics.

\subsection{Organization of Rest}
The rest of the lengths is organized as follows: Section 2 gives definitions, notations and introduction about our core algorithm we will revise. Section 3 defines the cost function into the time-continuous to resolve correctness; thus causes the discussion about sampling, accuracy, algorithmic overhead and the design on solution of experiment; eventually gives the discussion about experimental results. Section 4 is about transportation service, especially including deployment of computation, incorporating other modules, quick query for the fastest paths and extending the application in other field. The final section gives the summary.

\section{Preliminary}
\subsection{Domains and Notations}
A model to represent a transportation activity on real road networks one refers to can be composed of two domains the \emph{fixed} topology about geometrical information and the \emph{variant} of time domain to represent dynamic part: $\mathcal{T}=\{G, T\}$. The former item in set is a weighted-graph $G=(V,\tau, L)$, consisting of: (1) Set $V$ the collection of nodes; the scalar $n=\vert{V}\vert$. (2) The Greek character $\tau$ is referred to an arc-base set wherein in use of capital character $E$ rather than general character $m$ to denote this scalar $E=\vert\tau\vert$ besides. (3) As to each segment connecting two nodes in graph we call \emph{link} instead of the jargon \emph{arc} and, so to the \emph{edge}, herein we call \emph{lane} which at most containing a pair of links inversing mutually; wherein, (4) the link may be measured with the length function $\pi$, upon that having $\pi_{\in{L}}\colon\ell\rightarrow{R}^{+}$ for $\ell\in\tau$. 

All ingredients in graph $G$ may be thought of the fixed through the course of computation. Finally, of the character $m$, we use to denote the degree of \emph{outgoing} or \emph{ingoing} for a node in such link-base system.

Of the time domain, it can be represented of a set of time-dependent tables wherein the each form is $T_{\ell}=t_0, t_1, \ldots, M$. Anyway, we make every link $\ell$ in fixed domain associate with a time-dependent table $T_{\ell}$ to represent what happened over that link.

\subsection{Static Version Algorithm}
In this paper, due to the reason of the algorithm we will carry out adapted from the static version the Contest Algorithm (CA)\cite{11}, we have the necessary to comprehensively introduce this method in brevity. Totally speaking, this method CA possesses a set of well features, therein, which includes sub-linear runtime and a robust capacity that of oriented to data structure of graph rather than form, whose no susceptible of instance's constitution. 

The search manner of CA is \emph{One-to-All}, is the second property can be identified to our solution, as same as almost heuristic approaches going for the shortest paths. 

Alternatively, we can refer CA's success to employ the \emph{Best-First-Search} strategy to reduce the search scale in a large number so that the overhead could match the scalar of link quantity $O(E)$, which strategy is reportedly suggested by Gallo and Pallottino in 1988\cite{12}. By seeing the name, it is easy to understand the meaning as is Greedy-Tactic's kind: the system upholds the \emph{best} nodes that with minimal total weight preceding others and further correct them.

Idea is perfect, but there is needing an important module to support this heuristic kind; if not, the consequence would likely get worse than Bellman-Ford's method which in $O(nE)$. That is you must have an excellent \emph{priority heap}, which with a powerful performance to regulate those items to offer the best ones for search.

With an online database $LE$ as being such the heap, which derived from the Compound Binary Search Tree (CBST), in addition to do the routine works that already mentioned and the overhead of operation can be seen of the logarithm $\log{n}$ or constant; in particular, there is a special consideration to process a group of \emph{cousins} in constant runtime in tree, whose total weights are same one. 

To the algorithmic framework, the whole course may be seen to be composed of two phrases the \emph{labeling} and the \emph{correcting}. The antecedence is called Hybird Dijkstra's Method or HDM, which could be said to integrate the Dijkstra's algorithm and $A^{\ast}$ method. A set of \emph{Euclidean} shortest path maintained in search tree would come out in this phrase in runtime $O(E)$ that up with the scale of instance. 

Note that in this phrase, besides to get labeling over all target nodes for one-time to yield two lists loading a search tree and \emph{n}th total weights, which have been already mentioned in that discussion of function\eqref{s1}, there is another mechanism in effect to yield the \emph{third} list the \emph{origin} list, whose members would be likely to shorter others after executing HDM. Therefore this array will further become the input used in next phrase, the correcting. And it is thus maintained with data structure of database LE.

In correcting phrase, the overhead comes to sub-linear complexity that of $O(\lambda\log{n}+E)$ we call \emph{quasi-logarithm} and, the lower bound in $\Theta(n\log{n}+E)$ since the $LE$ can be seen as a random binary search tree in manipulation, so that the gross algorithmic overhead is $O(\lambda\log{n}+2E)$ to two phrases. 

We have got experiments to test CA on three typical instances: the grid instance, the random instance ($m\approx\log{n}$) and the complete instance ($m=n-1$). The results show the algorithmic performance higher than anyone before. Especially to large instance whose scale can up to mega level with million arcs, CA can still enable itself swimmingly to deal with.

The overhead of whole computing course is likewise parted into two shares: the \emph{Database Cost} or DC in polynomial $\Omega(n\log{n})$ for using database LE; the investigation on those links in $\Omega(E)$, called \emph{Search Cost} or SC.

That is easy to understand that proportion on two shares relates to scalars $n$ and $m$ in graph, which could be reflected by various forms. For example, to the weak connected instance with $n\ggg{m}$ like urban road network, the DC always takes the multitude share of the whole. But to the random graph with $m\approx\log{n}$ like internet network, both are closed.

In term of dynamic character, the changes of both will be distinct: beside DC the SC is indeed the \emph{functional} due to cost function, and the adding overhead than static version will be thus only stacked on the side of SC. 

If synchronize the practical runtime of DC and SC, we can use the former as benchmark to estimate the latter. To count the practical number, we can use the HDM in the SC's stead, because the \emph{labeling} and the \emph{correcting}, however, their orders of magnitude are equal of, as though they are in two different phrases.

\section{Time Continuous}
As to the account of the computing arrival time, it seems to be nature in those literatures before: the weight in time table $T$ is defined of through time in physical significance and the total weight is the iterative arrival time. This definition certainly is very simple and more familiar to manipulate query since table has been formed of a discrete list. 

Frankly speaking, we can refer a fetal error about arrival time to this interpretation above-mentioned. For example, consider two departure times $t_{i}$ and $t_{j}$ over a same link, both naturally used as two different indices directly for two thought times $w(t_{i})$ or $w (t_{j})$. 

Assume that the latter the departure time $t_{j}$ is delay to fore one then can be formulated: $\Delta{t}= t_{j} - t_{i}$ for $\Delta{t}>0$. Upon that we have an arrival time function $A= w(t) + t $ and get a difference $\Delta{A}=A_{j} - A_{i}$; and then 
$\Delta{A}=(w(t_{j}) + t_{j} )-(w(t_{i}) + t_{i})=\Delta{t} + \Delta{w}$. 

In addition, FIFO is a premise, so, if $\Delta{A}\leq 0$ then $\Delta{t}+\Delta{w}\leq 0$, which means the overtaking happens on these two vehicles. This is prohibited due to provision of FIFO and the subject we study is instantaneously to become a NP-hard problem besides.

In reality, the case is related: the traffic-through obtains a great improvement at that time point $t_{j}$, to the former, however at the departure time $t_{i}$, supposed of $ t_{j} - t_{i} =1$, there is still none the gain for oneself. So, the best economic decision for former is standing there than running forth. It is clearly ridiculous for this case to violate our common sense.

Apparently to the cause, there is no consideration on that time interval $\Delta{t} $ which relates to sampling frequency. However, we need to re-define this variable.

\subsection{Data Sampling}
As already mentioned, we questioned the correctness of arrival-time computing can be referred to definition of time interval. In the first time, we let a function $U$ to map the time interval to a real deterministic value: $U_{\in{T}}\colon \Delta{t}\rightarrow{R^{+}}$; and provide it equal to the weight involved to departure time $t$: $U_{\in{T}}(t_{i},\Delta{t_{i}})=w(t_{i})$. So for two indices $I_{j}$ and $I_{j+1}$ in time table $T$, they are mapping to two departure times $t_{i}$ and $t_{i+1}$, and then their difference is at least beyond the through time $w(I_{j})$. 

In this way, every time interval is no longer a fixed number, so the first benefit is to eliminate the case of overtaking absolutely. But yet, the collateral trouble emerges too because there is likely a departure time $t$ having $t_{i}<t<t_{i+1}$. For this case, besides meaning for accuracy to add relatively effort of calculation, anyway, the cost function has already become the \emph{time-continuous} rather than simple query in a discrete list.

In addition, it is not rigorous definition for the time interval either greater than or less than the through time. 

In the matter of fact, we have had no choice but to update the definition of weight. Our idea is to make table $T$ into a \emph{time-velocity} model: the time interval comes to a constant representing a fixed moment of time denoted by $\chi$ and $\chi > 0$; as it were, whose value is indeed to decide the traffic data sampling frequency.

And then we define the weight is \emph{mile} which the vehicle through as soon as possible during that fixed time interval $\chi$ rather than a time scalar before; we call \emph{velocity} denoted by $\upsilon$. In other words, weight is the highest speed vehicle could run

In this way, we have a gauge the scalar $\chi$ to measure a pair of departure times. When a departure time falls into an interval, it will imply a departure time should include two functions: one is for the index for table $T$ for velocity; another is for truncation for precise expression for real time over time interval.

It is naturally that our up-coming cost function will be a \emph{piecewise} function with a set of physical variables: there is a set of constants the length of link, encompass every one of theirs, we may take account of computing the through time by physical equation. There can be length function $\pi_{\in{L}}\colon\ell\rightarrow{R}^{+}$ for $\ell\in\tau$ or $\pi_{\in{T}}\colon\chi\rightarrow{R}^{+}$ thus to characterize a physical problem relevant to distance, velocity and time, upon that, we will have an inequality or equation as follows.

\begin{align}\label{ph1}
\pi_{\in{L}}(\ell) \leq &\sum^{\kappa}_{h=0}\upsilon_{I+h, ~T_{\ell}}\quad\text{ for }\kappa\geq{1}\nonumber\\
=&\pi_{\in{T}_\ell}(I+1, \kappa-1) + o( \pi(I+\kappa))+o(\pi(I)).
\end{align}

By means of this physical equation\eqref{ph1}, we may easily solve for the deterministic value of integral $\kappa$ ($\kappa\geq{0}$), and then to compute the large sum of through time by $\kappa\cdot\chi$, which over the link $\ell$ at the departure time $t$. At the right side of that equation, the function $o(f)$ is truncated function. In practical operation, this function is general used upon two segments in journey: one is beginning while the departure time falls in an interval, the rest journey might be unreachable, exceeding or jut reach the rightmost bound of the first interval; at the last segment of journey, we likewise have a precise truncating for rest length which shorter than the last velocity. 

In \cite{13}, Dean had ever to define such kind of piecewise function like ours; indeed the form Dean gave in that thesis is over complicated.

\subsection{Piecewise Function}
Therefore to visit time table with departure time $t$, it implies the departure time can be decomposed into two scalars by oneself over time interval $\chi$: the \emph{quotient} is index $I$ for table $T$; another is the \emph{remainder} for precise computation in first interval. If there is a rest mile, then continue to piece the mile to solve for variable $\kappa$. 

So the following pseudo code is a pattern by which, we just want to show the process to solve for variable $\kappa$ simply.

\begin{center}
@ Piecewise Function
\end{center}
\begin{flushleft}
\textbf{Input:} $t,\chi, \mathcal{T}$\\
$t=I\cdot\chi+{s}$; $e = \pi_{\in{L}}(\ell)$; $\kappa = 0$;~~\small{\emph{$\slash\slash$ Initializing.}} \\
~\\
\small{\emph{$\slash\slash$ computing the rest mile in first interval.}}\\
\textbf{If} $s > 0$ \textbf{Than} $e\coloneqq{e}-\pi(\chi-s)$; $I\coloneqq{I}+1$; \textbf{Else return};\\
~\\
\small{\emph{$\slash\slash$ piecing loop.}}\\
01. \textbf{Loop}: $e > \upsilon_{\ell}(I)$\\
02. ~~$e\coloneqq{e}-\upsilon_{\ell}(I)$; $\kappa\coloneqq\kappa + 1$; $I\coloneqq{I}+1$;\\
03. \textbf{End}\\
04. \textbf{If} $e>0$ \textbf{Than} $\varv = \upsilon_{\ell}(I)$;\\
\textbf{Output:} $\delta=\chi-s, \kappa, \varv, e$;
\end{flushleft}
~\newline
The exactly through-time is easy to calculate and denoted by $\delta$:
\begin{align*}
\delta&\coloneqq\delta+\kappa\cdot\chi + q+o(r) \\
&\textbf{s.t. } e=q\cdot\overline{\varv}+ r\quad\text{for }\overline{\varv}=\varv\slash\chi.\\
\end{align*}
The variable $r$ is final truncation. We call the whole process for through time \emph{Arithmetic Piecewise Function} or APF. 

Note that the property FIFO is still available while we provide velocity \emph{nonzero}. This proposition may be proven as follows.

\begin{prop}
To carry out APF as an essential calculation for through time, given two departure times $t$ and $t^{\prime}$ over link $\ell$; if $t \leq{t^{\prime}}$, then $t + \delta \leq{t^{\prime}} + \delta^{\prime}$.
\end{prop}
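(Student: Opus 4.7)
The plan is to reinterpret the APF output as a continuous time-integral of a velocity profile and then deduce FIFO from elementary monotonicity. First I would introduce the instantaneous velocity $v(\sigma) = \upsilon_\ell(\lfloor\sigma/\chi\rfloor)/\chi$, which depends only on the wall-clock time $\sigma$, not on the departure time of any particular vehicle. This is precisely the content of the time-velocity table model: at any moment $\sigma$, every vehicle still on link $\ell$ advances at the same common rate. I would then define the position function $P(t_0,\tau) = \int_{t_0}^{\tau} v(\sigma)\,d\sigma$ for $\tau \ge t_0$ and argue that the APF pseudo-code computes precisely the smallest $\delta\ge 0$ satisfying $P(t_0, t_0+\delta) = \pi_{\in L}(\ell)$.

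Once this identification is in place, the proposition follows in one step. For any $\tau \ge t'$ one has
\[
P(t,\tau) - P(t',\tau) \;=\; \int_{t}^{t'} v(\sigma)\,d\sigma \;\ge\; 0,
\]
because velocity is nonzero (hence positive) on every interval. Setting $T = t+\delta$ and $T' = t'+\delta'$, both satisfy $P(t,T) = P(t',T') = \pi_{\in L}(\ell)$. Assuming for contradiction $T > T'$, the displayed inequality gives $P(t,T') \ge P(t',T') = \pi_{\in L}(\ell)$; monotonicity of $P(t,\cdot)$ then forces $P(t,T') = \pi_{\in L}(\ell)$, so $T' - t < \delta$ would also solve the APF equation, contradicting minimality of $\delta$. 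Hence $T \le T'$, which is the desired inequality $t+\delta \le t'+\delta'$.

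The main obstacle, therefore, is the first step: verifying that the APF output --- built from the integer count $\kappa$, the opening remainder $s$, and the truncation term $o(r)$ --- really does equal the continuous integral $P$. I would discharge this by decomposing $\delta$ into three contributions mirroring the three phases of the pseudo-code: (i) the opening partial interval $[t,(I+1)\chi)$ of length $\chi-s$, during which the vehicle accrues $\pi(\chi-s)$ miles at rate $\upsilon_\ell(I)/\chi$; (ii) the $\kappa$ complete intervals consumed inside the loop, each contributing $\upsilon_\ell(I+h)$ miles in wall-clock time $\chi$; and (iii) the closing leg covering the residual distance $r < \overline{\varv}$ at velocity $\overline{\varv}$ and thus taking the truncation time $o(r)$. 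With this bookkeeping in hand, the monotonicity argument closes the proof immediately, and the nonzero-velocity hypothesis explicitly cited in the paper is exactly what guarantees strict positivity of the integral and rules out the degenerate case in which a spurious ``equality'' could otherwise be exploited to produce overtaking.
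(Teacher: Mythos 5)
Your argument is correct and reaches the conclusion by a genuinely different route from the paper. The paper's proof identifies each traversal with the ``space'' it occupies in the time table --- $\rho=[t,t+\delta]$ and $\rho^{\prime}=[t^{\prime},t^{\prime}+\delta^{\prime}]$, both of length $\pi(\ell)$ under the length function --- and runs a three-way case analysis on their intersection $\overline{\rho}$ (empty; nonempty with $\pi(\overline{\rho})<\pi(\ell)$; total overlap), asserting in each case that $t+\delta$ lands no later than $t^{\prime}+\delta^{\prime}$. You instead make the underlying physical fact explicit: the velocity at wall-clock time $\sigma$ is a property of the link, not of the vehicle, so the position functions of the two vehicles differ by the fixed nonnegative quantity $\int_{t}^{t^{\prime}}v(\sigma)\,d\sigma$ once both are en route, and minimality of the arrival time does the rest. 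Your version is the more rigorous of the two: the paper's case (2) simply states ``then $t+\delta$ in $\rho^{\prime}$,'' which is essentially the claim being proved, whereas your monotonicity-plus-minimality step supplies exactly the missing reason. What the paper's formulation buys is that it works directly with the discrete table and never has to certify that APF computes a continuous integral. That certification is the one place your proof still leaks: because of the closing truncation $o(r)$, the APF output is not exactly the minimal root of $P(t,t+\delta)=\pi_{\in L}(\ell)$ but a rounded version of it, and strictly one should check that rounding cannot reverse the order of two arrival times differing by less than one truncation unit. Since the paper's own proof is equally silent on this, it is a shared, minor gap rather than a defect of your approach.
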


\begin{proof}
Due to conditions above-mentioned, two couples of variable, $t$ and $\delta$; and $t^{\prime}$ and $\delta^{\prime}$ may respectively comprises a finite space in time table $T_{\ell}$, denoted by $\rho$ or $\rho^{\prime}$ and, length function: $\pi(\rho)=\pi(\rho^{\prime})=\pi(\ell)$ holds true. Having an observation over that \emph{intersection} $\overline{\rho}$ of two spaces $\rho$ and $\rho^{\prime}$, the overlapping part of both has: (1) \emph{Empty}, implies $\rho$ ahead of $\rho^{\prime}$ on time-axis in table. (2) \emph{Nonempty}, but $\pi(\overline{\rho})<\pi(\ell)$; then $t+\delta$ in $\rho^{\prime}$. (3) $\pi(\overline{\rho})=\pi(\ell)$; two spaces overlap totally. Therefore $t + \delta \leq{t^{\prime}} + \delta^{\prime}$ holds.

\end{proof}

Turn then to algorithmic complexity, we so far can have $O(\lambda{n}\log{n}+E\mathcal{A}(T))$ to present the runtime complexity for dynamic version algorithm; which is a \emph{functional complexity} in term of piecewise function $\mathcal{A}(T)$, moreover, we use variable $\kappa$ as index to measure the overhead of APF module in computing process, which is the query time for time table $T$.

\subsection{Design on Experiment}
Our experiments need to consider some main problems on design: (1) how to deal with the case the function APF being time-continuous; (2) how to simplify the piecing computation and let the overhead at a lower level; (3) how to make the time table be endless.\\
~\newline
Therefore, we have a design as follows.
\begin{enumerate}
\item There are totally 46 grades in the fixed length set $L=(l_{k})_{k=1}^{46}\colon{l_{1}}=250; l_{46}=2500$ we make for chosen, which equivalent to mile (meters) where the fixed gradient is: $l_{i+1}-l_{i}=50$ for $0<i\leq{46}$. And let a couple of inversing links in a same lane with a same mile. 

\item Let velocity variable $\upsilon\in\mathbb{N}\slash\{0\}$ none-zero, moreover, we have 23 grades to represent this variable: $\varv=[10, 120 ]$ (kph) and by $5$ (kilometers) as the fixed gradient in series. This measure guarantees \emph{non-decreasing} property on APF. 

\item Let time interval $\chi = 10$ (second). The velocity vector $\varv$ thus has a counterpart at 10 second level: $\varv.10=[167, 2000]$ (meters); It is clearly that we can continue to constitute a set of similar counterparts: $\varv.i$ for ${i}>0$.

\item Upon that, we have thus got a Cartesian product $\Delta=\chi\times\varv$ by tow vectors $\chi$ into $\varv$, called \emph{Time Interval Domain} or TID and a query function $\delta_{\in\Delta}\colon i,j\rightarrow{R^{+}}$, in that way, function can return various value of velocity based on different grades of \emph{speed} and \emph{time interval}. For example, $\delta(1, 1)$ is 3 \emph{meter} over per \emph{second} which is running during 1 second at 10 kph. In our experiments, we let 1 second is the least precise for truncating. In term of our definition, the time interval is changeable according to our demands. In the way of query TID, program can approach to the final results by adding discrete numbers. Now in experiments, we initially let the TID be maintained in $10\times{23}$.

\item The time table $T$ thus merely carries the numbers of grade of velocity for which, we have the construct initialized by one-dimension list $T=[0, P)$. With conforming to the conception of \emph{infinite} time-table $T$, that says, we have to have got a finite table used to approximate an infinite one for our experiments. To do that, we provide when one visits the table with index $I$ as cursor, which must be limited in the scope of table: let index $I\coloneqq I-kP$ where $k=\text{max}(k)\in\mathbb{N}$ such that $0\leq{I}-kP<{P}$. In this way, the cursor $I$ is only to stay inside table $T$ whenever and whatever. 

\item In addition, because of the machine memory is less indeed (4G) for our experimental laptop, hence, we make all links share with a same time table $T$ instead of one to one, the pattern pertinent to link and time table in almost all literatures. Furthermore, in order to add challenge to imitate realistic situation, while a departure time is used to go for query velocity, there will be a pre-process on that parameter: we impose function to use the sum of departure time and length of link involved as being new departure time exclusively used for query. 

\item As to assigning time table $T$, we adopt two manners: the \emph{random} and the \emph{wave}, therefore, respectively called \emph{random network} or \emph{wave network}. In some sense, the former seems to make weights always vary in a volatile way over time; conversely, the change on wave may be thought of smooth and mild; the varying is moderate either for upwards or for downwards.

\item At the end, we use \emph{square} grid instances where both row and column with same quantity of nodes denoted by $k$ and the instance denoted by $\varg.k$; actually having $k=\sqrt{n}$. For example, $\varg.30$ is the instance with 900 nodes ($30\times 30$) and the number of links is round 3600.

\end{enumerate}

In experiments for each instance, we let program randomly select a node in graph as \emph{source} with a stochastic natural number as being initial departure time. The following table.1 (Appendix) exhibits the primary resulting data on a set of instances, wherein the max scale up to ten millions.

Herein, we call this algorithm \emph{Dynamic Contest Algorithm} or DCA adapted from static version CA (the source code has been hosting in that site: \emph{https:\text{//}snatchagiant.github.io\text{/}Shortest-paths-algorithm-over-dynamic-graph-for-transportation-management}, and our experiments likewise based on those codes).

\subsection{Interpretations of Experiments}
The results yielded from our experiments are surprisingly, which include three aspects: the instance scale, the overhead of APF and the search efficiency.

Firstly we estimate two types of network: The random is faster than the wave for 20\% in executing time, correspondingly to times of querying time-table $T$, the index of the \emph{average times of querying} (ATQ) is still to win the wave for less than 27\% in general. 

The other experimental data, especially relates to transportation, however, two types are unexpectedly closed. Actually, the comprehensive data show more interesting as follows.\\
~\newline
\textbf{Instance Scale. }These experimental instances are listed in the leftmost column of table.1. We may have a conception about the instance scale. Synchronize the scale of ours and reality, it may be to say: of the graph $\varg.50$, we can view as an \emph{average urban} with 2500 nodes; the $\varg.100$ with 10000 nodes more than Montreal (Canada) that with 7000 nodes\cite{5}, which might into a \emph{metropolis}; a \emph{state} scale can match the $\varg.1k$ with 1M nodes; the last one $\varg.3.5k$, we can say \emph{continental} magnitude with more than 12 million nodes and 48 million links.\\
~\newline
\textbf{Runtime. }The runtime in the second column, at least they show our program can halt and get a pretty score. There is a typical case can be referred in \cite{14}: with reference to the report from Goldberg et al., they took \emph{years} on testing an algorithm whose complexity in $O(nm)$ (where $m$ equal of our $E$) whose counterpart is Bellman-Ford's method. In that experimental instance, there were more than 29 million nodes and at least 70 million arcs. Note that, the record is thus to come out upon a static instance the real road network of Northern American Continent with real data.

In contrast to Goldberg's score, ours took less than 109 \emph{second} to achieve the computation on a 12-million-nodes grid instance. We yet cannot find out any more similar records in literatures, more correctly, for the dynamic shortest path computation within such large scale or more.\\
~\newline
\textbf{Hardware and Software. }To experimental equipments including hardware and software, simply compare Goldberg's and ours: (1) Goldberg's work station with core frequency 2.4G versus our laptop with 1.8G; and the machine memory also win ours by 16G versus 4G; (2) both C++ as compiler, but there is a bit difference between both OSs the Windows Server 2000 and the Windows 10 (home edition version). Of course, the memory on our laptop finally restrains our testing on larger instance.\\
~\newline
\textbf{The Search Efficiency. }If there is a pair of physical places as \emph{source} and \emph{target} in instance we call \emph{s-t-pair}. In the nature of things, the one-to-all search tree contains $n-1$ links, meanwhile, which rich in \emph{s-t-pair}s. Since any sub-path in a shortest path is also the same kind, therefore these sub-paths should be valuable able to benefit all-to-all. 

The fourth column in table.1 illustrates a well quality of our model that the average cost on each \emph{s-t-pair} would decrease along with the increment of scale as batch process in manufactory, the inverse relation of unit cost and scale.

The good news is at the rightmost column in table.1, which exhibits the average querying time on each links can be said to approach to a stable value against the time-continuous context of APF. Furthermore the table.2 (appendix) gives the corresponding percentages about two shares of DC and SC into the gross overhead. Whereas, these experimental data are on appearance to render the DC share would increase and get to the chief overhead of computation against the instance swelling. 

From a different angle, the case proves the piecewise function tends to present a constant factor into variable $E$ in complexity polynomial, which certainly reflects those features on velocity including distribution and the range of value in dynamic domain. Apparently, this number can be an index to reflect the traffic state in some extent.\\
~\newline
\textbf{Sampling. }Recall our model of time-velocity to represent a time-continuous, in this way, we make scalar $\chi$ changeable of the practical demands. It may be imagined to lower this scalar, and due to the fixed lane and the length is the upper bound for piecing, the index ATQ should be inevitable of increment and, the all adding effort will be stacked at SC side.

The table.3 shows the trials on $\varg.1000$ instance which let $\chi\in\{10, 9,8,7, 6,5,4,3\}$ respectively. These resulting data support our prediction totally. Here, our trials just elaborate two points: one is the flexible for gathering traffic data; two is we can study the trade-off about the sampling interval, accuracy and computing overhead via this way of simulation.

\section{Transportation}
We turn to the transportation field. The first content to be discussed is those general data about average length, average velocity and average through time which been listing in table.4 (appendix).

On appearance, those results give us an overall impression: despites of the scale of instance change, all average through times would around the number of 60 second; even to on two different networks the random and the wave. Of course, as our already discussions, the case is able to relate the data constituent and distribution. But we will not go to the issue further. Our interest will go on to the algorithmic framework, deployment of computation, service model and applying development.

\subsection{Synthesis}
If we decompose DCA frame work, we can have a comprehensive review: (1) The online database LE is a functional module and its job is just to serve for sorting items, and then it is able to be demerged off process as an independent functional module. (2) Alternatively there are twice to invoke a same labeling module in whole process. We likewise can abstract it function as an estimation system entirely.

To this system, we have an innovation relating to embed a set of modules into this system motivated to build an integrated one. For example, to a target $v$, assume there is a group of links $(\ast, v)\subseteq\tau$ towards to node $v$, in which, they all satisfy the condition of fastest path, and then program can randomly pick up one as result. One can say these iterative arrival times at $v$ are equal of and minimal, nevertheless a set of scalars on them is different each another, which including the departure time, the length of link, the velocity and the through time at all. 

We now select length function to enumerate the minimal length of link involved, as follows.

\begin{equation}\label{m1}
d.v\coloneqq{u}\quad\textbf{s.t.}~\pi(u,v)=\text{min }\pi(\ast, v).
\end{equation}
It is clearly that the length function\eqref{m1} \emph{behind} function APF in executing order of procedure. By intuitive sense, the final results may be said of the \emph{shortest-lane} in all answers of the fastest path, conversely, swap their several positions, the result would be the fastest path underlying the shortest lane.

One is quickly to refer to the halting problem. It is no problem to halt because two functions are \emph{convergent}. In contrast to this feature, we update length function to enumerate the \emph{longest} link. The same case of length function behind APF, the resulting context seems to no problem: the longest underlying the fastest. But to the case of the ahead of APF, we cannot suggest anything for physical significance.

Although to gather two modules with different features, yet, finally the APF imposes the program's halting. Here we are not going to continue the further study. We are only just notice readers: as though we got a no bad result this time, to incorporate other modules, you must have a consideration on functional feature, at least the convergent or the divergent.

On the other hand, we have exhibited a good frame work in our solution easy to incorporate others modules, which owes to the feature of the oriented framework.

\subsection{Waiting Allowable}
There is a kind of problem we must meet with, that is \emph{waiting allowable}, more correctly, it allows vehicle in our model able to wait at nodes or link for a period of time. Through setting velocity $\upsilon$ equal to zero or a tiny number, the aim is easy to reach; but we understand that besides adding the effort in piecing process in APF to represent the vehicle ever in waiting, either at node or at link, another problem is more serious which might lead to our problem becomes a NP-hard one as already mentioned. 

On the other hand, the challenge is this problem is necessity as a realistic issue pertinent to the traffic administration by signal, vehicles congestion in lane, road briefly closure and something else. Likewise, allow a long waiting time may cause a bigger risk to our model which can deteriorate the quality of computation; as it were, program might almost equivalent to non-halting; as to this point, we can refer cause to piecing thread: in term of APF definition, the process is able to be checked for waiting piecing thread ending, in this way, a set of scalars comes out for next phrase. In the nature of cases, it would still get worse in particular of the case happening which horrible congestion is destroying the whole traffic.

In \cite{3}, Dean translates the co-existing relationship of waiting-allowable and FIFO: the key point is the arrival time is possessed of \emph{non-decreasing} property that is $a_{\ell}(t)\geq{t}$, otherwise NP-hard. This remark implies whatever the latter does, it would at most beside the former but never to overtake. 

Therefore we need to prove: nevertheless velocity be set to zero to represent the waiting allowable, our model would be still to respect the provision of FIFO. To prove that, the key point is to survey the case of two vehicles via the same order whereas there is a delay between them.

\begin{prop}
We let the velocity in time table $T$ might be zero, the FIFO law is still available while carrying out APF.
\end{prop}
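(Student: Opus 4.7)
The plan is to mirror the argument of Proposition~1 and reduce the statement to the fact that the velocity profile $\upsilon_{\ell}(\cdot)$ assigned to link $\ell$ is a function of real time only: both vehicles, while on $\ell$, experience the identical sequence of slot velocities, so whichever departs earlier retains its head start across every slot, including the slots where the velocity is zero.

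First I would fix the operational meaning of APF when $\upsilon_{\ell}(I)=0$ is reached in the piecing loop. The test $e>\upsilon_{\ell}(I)$ on line~01 is automatically satisfied (since $e>0$ while $\upsilon_{\ell}(I)=0$), so the iteration simply advances $\kappa$ and $I$ without consuming any mileage; this encodes a one-slot wait of length $\chi$ on the link and correctly augments $\delta$. In particular, if both vehicles happen to be on the link during the same zero slot they are stationary together, which is the crucial invariant.

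Next I would reparametrise both trips by real time. Let $P(s)$ and $P'(s)$ denote the distance already covered along $\ell$ by the two vehicles at real time $s$, with $P(t)=P'(t')=0$ and arrival conditions $P(t+\delta)=P'(t'+\delta')=\pi_{\in{L}}(\ell)$. Because the displacement accrued during slot $I$ depends only on $I$ and not on the vehicle's identity,
\[
P(s)-P'(s)=P(t')\qquad\text{for all }s\geq t',
\]
and $P(t')\geq 0$ is the nonnegative head start accumulated during $[t,t')$ (a sum of $\upsilon_{\ell}$-values, each $\geq 0$). Hence at $s^{\ast}=t'+\delta'$, when vehicle~2 just reaches the end of $\ell$, one has $P(s^{\ast})\geq\pi_{\in{L}}(\ell)$, so vehicle~1 has already arrived; that is $t+\delta\leq s^{\ast}=t'+\delta'$, which is precisely FIFO. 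Translated into the finite-space vocabulary of Proposition~1, $P(t')$ equals $\pi(\rho\setminus\rho')$, and the three cases (empty, strictly partial, or total overlap of $\rho$ and $\rho'$) reduce verbatim to that earlier analysis—zero slots contribute $0$ symmetrically to both $\pi(\rho)$ and $\pi(\rho')$, so monotonicity is preserved.

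The main obstacle I anticipate is termination rather than monotonicity: if $\upsilon_{\ell}$ were permanently zero from some slot onward, either $\delta$ or $\delta'$ would diverge and the claim degenerates into a vacuous inequality. I would therefore make explicit the mild hypothesis implicit in the preceding waiting-allowable discussion—that every zero stretch is eventually followed by a positive velocity, so waiting always ends in finite time—under which both APF calls halt and the inequality $t+\delta\leq t'+\delta'$ holds cleanly. Without that proviso one still obtains FIFO for the vehicle that does complete its traversal, but the statement loses content for the other.
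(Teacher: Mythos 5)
Your proof is correct and follows essentially the same route as the paper's: the paper likewise reduces the claim to the complement $\overline{\rho}=\rho\setminus\rho'$, observing that overlapping slots contribute identically to both traversals, so the later vehicle can at most tie (precisely when every velocity in $\overline{\rho}$ is zero) but never strictly overtake. Your explicit head-start invariant $P(s)-P'(s)=P(t')$ and the termination proviso for a permanently zero velocity stretch (which the paper defers to its later truncation bound $M$ in the waiting-allowable discussion) render the argument more carefully than the paper's two-line sketch, but the underlying idea is the same.
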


\begin{proof}
Following proposition 1, there can be a complement $\overline{\rho}$ by space $\rho$ \emph{minus} $\rho^{\prime}$. Obviously, if assume their arrival times equal of, then all velocities in the complement $\overline{\rho}$ are zero or trivial able to be truncated down; on contrary to one in $\overline{\rho}$ no zero, then leads to two lengths $\pi(\rho) > \pi(\rho^{\prime})$, the contradiction to the premise for through a same link. The FIFO clause is still available.

\end{proof}

Through our proof, we understand the waiting conception is partially available just due to overtaking short; meanwhile, we must attend the quantity of zero-velocity in piecing thread lest to damage quality of our computation.

\subsection{Summary}
Review those discussions we had made, our computing model has been possessed of four properties, which are the FIFO; One-to-All; Time-Continuous; Fastest Path and Partial Waiting-Allowable. These four construct the context underlies our study on transportation.

Our model has at least realized these functions: There is a powerful performance able to deal with large instance that up to continental level is \emph{first}. \emph{Second}, traffic sampling frequency is alterable up which, people can be flexibly to set according to their demands rather than before, the error and the rigid. \emph{Third}, algorithmic frame work is enough simple or clear to be decomposed for deployment to meet user's want. \emph{Fourth} is with a robust character to incorporate other estimation modules if feasible. \emph{Fifth}, allow the waiting case occurs in model to imitate real-time case but it may incur the risk of problem of non-halting for piecewise function.

Therefore, we can adopt a wise of truncating the chronological series in table $T$ to prevent from non-halting: provide a large number $M$ as being upper bound for query length in table $T$. Within the block data, if APF cannot finish computing, then judge the link being interruption and stop the search over this link. In this way, the overhead of APF can be seen of $O(M)$ at worst case.

On the other hand, this mechanism as well benefits us to save time resource: to distributed computation, the time table $T$ can be hosted in a server as a database shared with all processes. If a process need many times of query for that database over a link, in this way, it maybe leads to a lot of queries thrown in a long queue in waiting. That is not worth to the waiting time.

\subsection{Quick Query}
Another problem arises that whereas we tackle the shortest path problem and optimize our solution as possible as we can, but as it were, these capacities is still relatively limit and hard to accomplish much more. 

A report from \cite{15}, where Sanders and Schultes implemented the Dijkstra's method on a \emph{static} network of Western Europe that with 18 million nodes wherein those fixed weights assigned with average travel time, their machine took less than 9 seconds whose hardware with Processor 270 (A$\ast$D) clocked at 2.0 GHz with 4 GB main memory; procedure runs in SuSE Linux 10.0 OS and complied in use of C++. To contrast with our trials, implement a HDM on a \emph{dynamic} instance that with 12 million nodes, which took less than 24 seconds. Although there is no qualitatively difference between both of Sanders and ours, yet it is a sweet hour to quick query for fastest path within a continental extent. 

In some sense, Dijkstra's method and ours has a common defect which is the manner of search the one-to-all. It seems to overuse resources to deal with a query for shortest path through sparing all resources even to achieve a little task; something seems not a bit of necessity.

The case is easy to be translated to a contradiction between users and ours: user just wants to a tailored service to submit a \emph{s-t-pair} to system only for a path; instead, our solution resembles the manufactory whose way inclines to mass production, a cluster service.

The conflict is not easy to reconcile, so that people want to have a consideration benefiting both. Well, the conception of \emph{overlay} graph has thus been argued and advocated by some researchers. The core idea is to extract the useful information and concentrate them in a set, more correctly, a cutting graph in rather small scale enough to support online query; as it were, we can view this set as a buffer among users and us.

Here we refer to two typical works: Goldberg\cite{14} on $A^{\ast}$ method with landmark and the highway hierarchies of Sanders\cite{15}. They call the wise of extracting overlay from original graph \emph{pre-computation}.

Of course, this kind of wises is scarcely to apply on dynamic network at least because of two points: (1) In general, the mechanism are extremely rigid so as to susceptible of error due to network state change even if a little bit, people call the case \emph{data perturbing}. For example of the \emph{landmark} method, it works based on finding shortcut depending on inequality like Euclidean planar. In \cite{9}, it reports a small number of motorway changes, 95\% queries could be slowly in severe degeneration. That means needing to renew to compute from scratch. (2) It is about the core algorithm the \emph{bidirectional} Dijkstra's method, which cannot deal with the dynamic states well. In that instance, the unique knowable state is at the source other than the static which every node is known, thus, this algorithm may not be employed because target endpoint's state is unknown

Therefore, of such kinds, the idea succeeds but core algorithm fails, we call \emph{Static Pre-Computation} or SPC.

To our solution, it is easy to understand that by implementing DCA, we have a search tree yielded. Hence, we make an overlay graph actually becomes a data mining. Those objects are \emph{s-t-pair}s in that tree and the existed shortest paths.

We hereby roughly partition the service model for quick query for shortest path into two interfaces\textemdash the \emph{online interface} and the \emph{offline interface}. The latter's job is pre-computing. And we will encompass this model to stretch our idea and discussion. 

\subsection{All-to-All Model}
As already mentioned, our goal is to supply the shortest path for every probable query that of a \emph{s-t-pair} at any feasible departure time. It is easy to understand we are going to finish the toil of computing \emph{all-to-all} for every \emph{s-t-pair} without fail. Against these resulting search trees, it seems like a great easy job but to reckon the overhead of coverage.

We denote the overhead by $O(F)$ to implement DCA for once. If we have every node in graph as source for once, the all-to-all coverage is surely to accomplish but the complexity should equal to implementing a Bellman-Ford's method for once. For example of $\varg.3.5k$, refer the practical runtime in table.1; which shows taking 1.5 \emph{minute} to execute DCA on that scale for one time. We can reckon the executing time for all-to-all coverage in use of our laptop. The result is easy to obtained that up to four \emph{decades} by $1.5\cdot{3500^2}$. This is why Goldberg reported \emph{years} to implement a $O(nE)$ algorithm\cite{14}. 

Without questions, in that way, there is a great deal of repetitive paths yielded in that covering course. Here we firstly reckon the lower bound, upon that, we need to count the quantity of \emph{s-t-pair} in a search tree.

We here suggest a triangle-area wise to solve for that number as through it is approximate: we can see the \emph{depth} of tree as \emph{height} of a triangle where the source is \emph{vertex}; the \emph{area} is known of the nodes amount $n$; finally the \emph{bottom} can be the amount of terminals in the tree. When we solve for the bottom, through this number, we can approximately learn the quantity of paths in tree which from source to all terminals. 

\textbf{Example. }To an instance $\varg.k$, we suppose source at a corner on instance. The \emph{shortcut} length (the amount of links) to the farthest target node might be $2k-1$, can as being height. The bottom is obtained of $k=\sqrt{n}$ swiftly by double area over height: $2n\slash{(2k-1)}$.

Furthermore, we thus have a Stirling number $k\cdot{C}_{2k}^2$ to reckon. The outcome is $\sqrt{n}n$. The experimental data on third column in table.1 supports our analysis. Thus, we have the lower bound of all-to-all overhead: $\Theta(\sqrt{n}F)$; the $k$ times to implement DCA.

Of course, if one wants to touch this bound, he needs to find a set of \emph{proper} nodes as being source. Certainly, we so far have not any non-trivially theorem to support this action. 

To the upper bound, it can be predicted that it seems to $O(nF)$ that means each node being source for once; the way is impractical. Here we have a \emph{conjecture}: every node in circumference of graph as sources for once, we have an all-to-all coverage; called \emph{Knitting Conjecture}. 

To a square grid instance, the upper bound could be $O(4\sqrt{n}F)$ only if knitting conjecture holds. Then on instance $\varg.3.5k$, the runtime could be 6000 to 24000 \emph{minute} or 100 to 400 \emph{hour} of wall clock time on a laptop.

To a metropolis scale $\varg.100$, score would be in the scope from 6.2 to 25 \emph{second} on one laptop for knitting conjecture; even to every node as source, the score would be less than 10 \emph{minute} too. As things stand, one can say that covering computation is feasible at urban level.

\subsection{Overlay Zone}
As things stand, we seem to achieve the all-to-all coverage and set up database to deposit a set of shortest paths for every \emph{s-t-pair} in graph. If so, it is easy to assess the memory space in $O(n^3)$ to store a snapshot of an instantaneous traffic state. 

In the nature of cases, to those of hot-query points, this idea can be said of no too bad. Actually, the operations will become rather rigidly and wearing resources: to a same \emph{s-t-pair}, for which, there could be many shortest paths corresponding to different time points, a \emph{discrete} time-dependent list. Apparently, among these paths, it is possible for majority in each are totally identical; but database yet has to see them as being the different.

Instead, if merge all nodes in some paths as a cluster or a \emph{cutting} graph out of the original, and if it is \emph{rather} small, however, the online implementing DCA within that scale and the overhead is completely affordable, not to impact the experience of user. We call this overlay graph \emph{Hot Zone} denoted by $Z$.

The benefit for this zone is clearly that it represents time-continuous by which, the query can be submitted at any time point in a feasible period. Thus, we get a flexible measure to deal with the problem which is the traffic state varies over time. Servers need not store too more details.

\textbf{Example. }Now we set the \emph{size} of $\vert{Z}\vert = 8\sqrt{n}$. For instance $\varg.3.5k$, the shortest fastest path may contain about 7$k$ nodes. According to our data in table.4 about the average through time over a link, the journey on so long path should need a vehicle tirelessly to run in about 6 days to cover ground and the maximal precise deviation is 45 \emph{minute} for \emph{second} as the minimal unit of time.

In term of our provision about the max size of zone, we can have a max one with $28k$ nodes in $\varg.3.5k$. If a $10\times 2800$ instance for equivalent simulation, we have less than 250 \emph{ms} token in our laptop for implementing DCA on that zone for once. Turn to the scale of metropolis the $\varg.100$, the equivalent max-zone would be $\varg.30$ and we have the runtime within \emph{microsecond}. 

Certainly, there is a host of small zones in a larger one with many different departure times and arrival times. What is the exact and the cheap way to extract them, we leave the problem down to readers. At the end, to a routine work, the search complexity for zone model could be in $O(\lambda\tilde{n}\log\tilde{n}+2M\tilde{n})$ for $\tilde{n} = \sqrt{cn}$ and $c\geq 1$.

\subsection{Development}
In this section, we will discuss application for other works; for example the emergency of congestion in road network. If we consider the detour almost always in vicinity of those roads that in zone and severance, then by the zone swelling, we have another shortcut to guide driver out from the dilemma.

An important application comes from the field of urban logistics. This problem even seems so far to have no formal name in literatures, which is about how to find out a \emph{best} departure time in a given period of clock time to launch a journey at a source such that we have the through time minimized to reach the destination lest to meet with the rush hours. We here call this problem \emph{Best Opportunity Problem} or BOP. 

In general case\cite{16}, the typical measure is to employ Bellman-Ford's method in repetition for search at many time points; finally by sorting those outcomes, the optimality will be thus obtained. The contributions Ding and his team have done are working $A^{\ast}$ method as the core algorithm instead of Bellman-Ford's.

However, the theoretical analysis on those wises is beyond our scope of this paper. We just only suggest by the way of our offline computation, the problem may be obtained to resolve: (1) A \emph{s-t-pair} and a time period user gave is known; correspondingly, offline interface could supply a set of zones which represent the traffic state about that \emph{pair} during a period that at least contains the given one; so the search would be limited in this scope that implies: the time point and zone for the possibly fastest path. (3) As already mention, many times to implementing DCA on such smaller scale cannot make massive overhead, surely great lower than Bellman-Ford's for one time over the all resources from scratch. (4) The computing time interval could be arbitrary absolutely according to your demands just because the zone is time-continuous; of course, you need make a trade-off for multi aspects.

If deal with a host of queries for BOP, then our cluster service would reduce the overhead over each query much more than the current, nevertheless the emphasis is partially to solve BOP because by the means of offline computation.

\section{Conclusion}
In a word, we give a relatively comprehensive discussion on transportation problem by encompassing a powerful algorithm on dynamic graph. By analysis of computation complexity and the experimental simulation, we succeed to discuss a host of problems, therein including some practical ones which had been ever ignored before, and new topics.

In the truth of matter, we only give the principles for computing model or approximate data. But, we can believe our suggestions in this paper may inspire people to think about how to build a sophisticated intelligent transportation system, which is integrated, exact, modular and the flexible oriented framework.

\newpage
\begin{center}
\textbf{\Large{Appendix}}
\end{center}

\begin{center}
Table.1 \small{\textbf{Performance (wave)}}\\
~\\
\renewcommand{\arraystretch}{1.2}
\begin{tabular}{l|rrrr}
\hline
\footnotesize{\textbf{Inst.}}&\footnotesize{T(ms)}&\footnotesize{\emph{s-t-pair}s(S)}&\footnotesize{T/S($\mu{s}$)}&\footnotesize{ATQ}\\
\hline
$\varg.50$&15&$1.13\cdot 10^5$&0.132&17.1\\
$\varg.100$&62&$5.77\cdot 10^5$&0.107&16.8\\
$\varg.400$&1\,031&$5.1\cdot 10^7$&0.020&17.2\\
$\varg.1k$&7\,031&$8.9\cdot 10^8$&0.008&17.5\\
$\varg.2k$&32\,141&$7.6\cdot 10^9$&0.004&17.4\\
$\varg.3.5k$&108\,484&$3.6\cdot 10^{10}$&0.003&17.2\\
\hline
\end{tabular}
\end{center}
\footnotesize{\textbf{REMARK: }\emph{Item T is the practical runtime counted by millisecond (ms); the second ratio of \emph{s-t-pair}s amount into T in use of microsecond ($\mu{s}$) as physical unit. ATQ is average time of query by total query number over arcs amount.}}
~\newline
\begin{center}
Table.2 \small{\textbf{Overhead (wave)}}\\
~\\
\renewcommand{\arraystretch}{1.2}
\begin{tabular}{l|rrrr}
\hline
\footnotesize{\textbf{Inst.}}&\footnotesize{SC}&\footnotesize{DC}&\footnotesize{DC$\slash$SC}&HDM(ms)\\
\hline
$\varg.50$&0\%&100\%&null&0\\ 
$\varg.100$&48.4\% &51.61\%&1.07&15\\
$\varg.400$&51.4\%&48.59\%&0.95&265\\ 
$\varg.1k$&48.4\%&51.56\%&1.06&1\,703\\
$\varg.2k$&46.0\%&54.01\%&1.17&7\,391\\
$\varg.3.5k$&43.9\%&56.10\%&1.28&23\,812\\
\hline
\end{tabular}
\end{center}
\footnotesize{\textbf{REMARK: }\emph{Item SC and item DC respectively represents percentage of share in total overhead. HDM is the number of milliseconds for executing HDM.}}

\begin{center}
Table.3 \small{\textbf{Sampling Frequency and Overhead (wave)}}\\
~\\
\renewcommand{\arraystretch}{1.2}
\begin{tabular}{l|rrrrr}
\hline
\footnotesize{$\chi=$}&\footnotesize{ATQ}&\footnotesize{SC}&\footnotesize{DC}&\footnotesize{DC$\slash$SC}&HDM(ms)\\
\hline
10&17.2&48.4\%&51.64\%&1.07&1\,719\\
9&19.0&49.3\%&50.75\%&1.03&1\,797\\ 
8&21.4&51.7\% &48.25\%&0.93&1\,969\\
7&23.5&53.1\% &46.91\%&0.88&2\,078\\
6&27.1&56.8\%&43.24\%&0.76&2\,297\\ 
5&31.6&59.0\%&41.02\%&0.70&2\,516\\ 
4&38.4&62.4\%&37.57\%&0.60&2\,907\\
3&48.7&65.6\%&34.41\%&0.52&3\,500\\ 
\hline
\end{tabular}
\end{center}
\footnotesize{\textbf{REMARK: }\emph{The scores is out from the experiments on $\varg.1k$ instance.}}

\begin{center}
Table.4 \small{\textbf{Transportation Data (wave)} }\\
~\\
\renewcommand{\arraystretch}{1.2}
\begin{tabular}{l|rrrr}
\hline
\footnotesize{\textbf{Inst.}}&\footnotesize{$\widebar{L}$ (meter)}&\footnotesize{$\widebar{V}$ (kph)}&\footnotesize{TT ($\widebar{L}/\widebar{V}$)(sec.)}\\
\hline
$\varg.50$&1167.93&67.65&62.15\\ 
$\varg.100$&1119.94&68.19&59.80\\ 
$\varg.400$&1132.64&67.14&60.05\\ 
$\varg.1000$&1122.73&66.17&60.08\\ 
$\varg.3500$&1471.61&88.43&59.91\\
\hline
\end{tabular}
\end{center}
\footnotesize{\textbf{REMARK: }\emph{The all items are average numbers over link\textemdash $\widebar{L}$ is of length; $\widebar{V}$ is of velocity; TT is of through time.}}
\begin{center}
\includegraphics{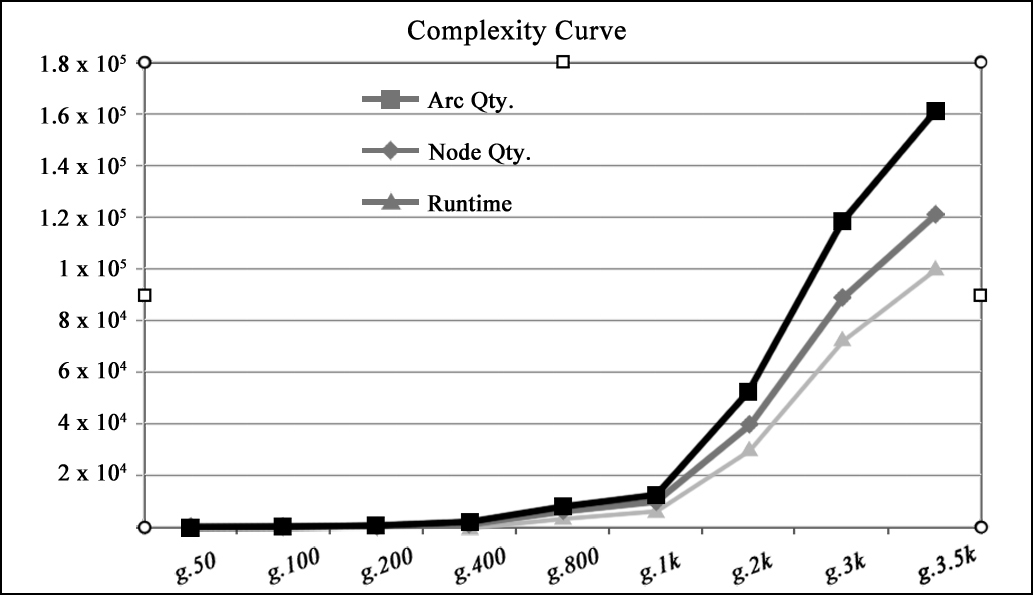} \\
~\newline
Figure.1\end{center}
Figure.1 above is presenting a triple of variants: nodes number, links number and runtime (counted by wall clock the millisecond). 

\begin{center}
\includegraphics{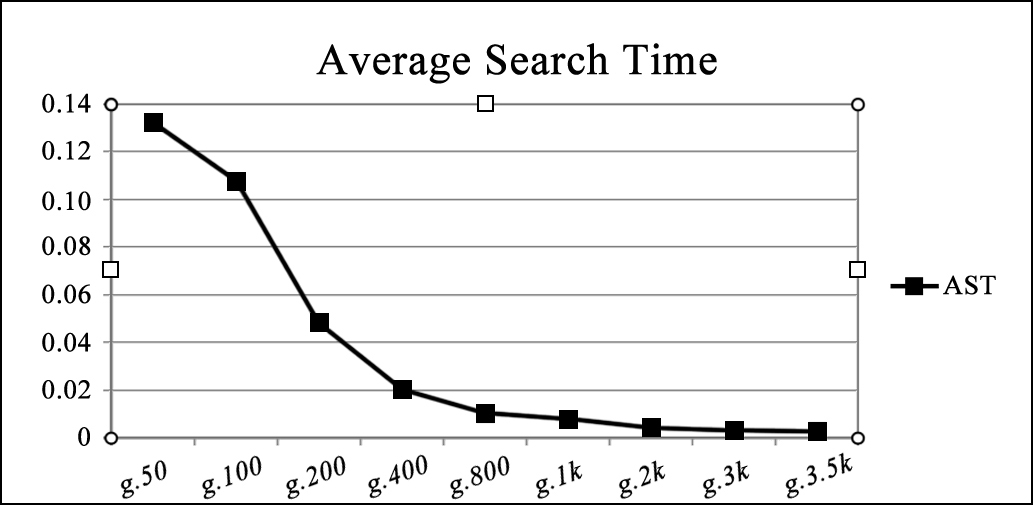} \\
~\newline
Figure.2\end{center}
Figure.2 is illustrates the change of search efficiency if we measure it with average search time on \emph{s-t-pair}.

\begin{center}
\includegraphics{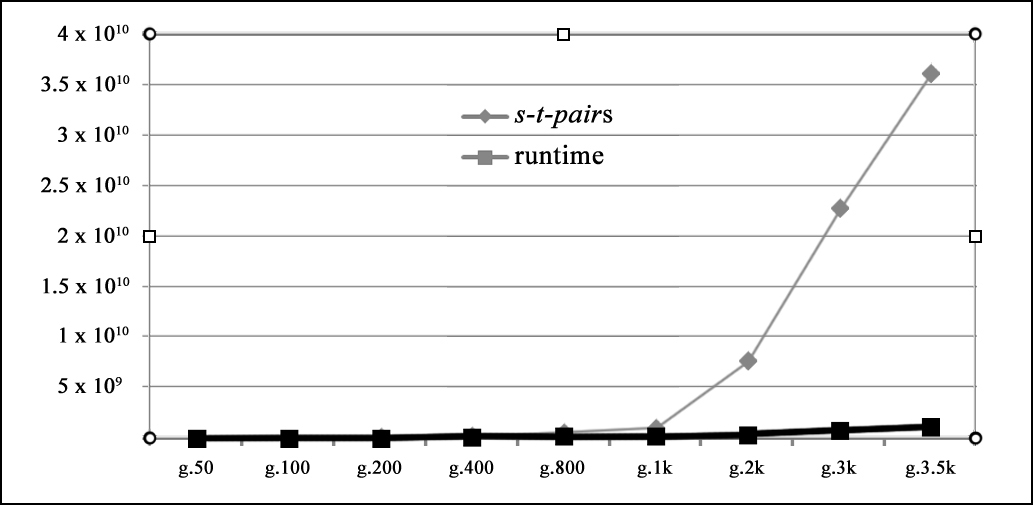} \\
~\newline
Figure.3\end{center}

We let numbers of runtime to multiply $10^4$ to show the tendency of \emph{s-t-pair} scale and runtime. They are respectively represented by the square root of $n$ and logarithm. Apparently the increment of square root is fastest than logarithm a lot. 


\begin{thebibliography}{99}
\bibitem{1} Kenneth L Cooke and Eric Halsey. The shortest route through a network with time-dependent internodal transit times. Journal of mathematical analysis and aplications, 14:493\textendash498, June 1966.
\bibitem{2} Alexander Schrijver. On the history of the shortest path problem. Documenta Mathematica. Extra Volume ISMP, pages 155\textendash167, 2012.
\bibitem{3} Brian C. Dean. Shortest paths in fifo timedependent networks: Theory and algorithms. Technical report, 2004.
\bibitem{4} Ismail Chabini and Sridevi Ganugapati. Parallel algorithms for dynamic shortest path problems. International Transactions in Operational Research, 9(3):279\textendash302, May 2002.
\bibitem{5} Ismail Chabini. A new algorithm for shortest paths in discrete dynamic networks. Proc., 8th IFAC Symposium on Transport Systems, 30:537\textendash542, 1997.
\bibitem{6} Byong-Hun Ahn and Jae-Yeong Shin. Vehicle routing with time windows and time-varying congestion. Journal of the Operational Research Society, 42(5):393\textendash400, May 1991.
\bibitem{7} Stuart E. Dreyfus. An appraisal of some shortestpath algorihtms. Operations Research, 17(3):395\textendash412, 1969.
\bibitem{8} D.E. Kaufman and R.L. Smith. Fastest paths in time-dependent network for intelligent vehiclehighway systems application. IVHS Journal, 1:1\textendash11, 1993.
\bibitem{9} Giacomo Nannicini and Leo Liberti. Shortest paths on dynamic graph. http://lix.polytechnique.fr, March 2008.
\bibitem{10} Hector Gonzalez, Jiawei Han, Xiaolei Li, Margaret Myslinska, and John Paul Sondag. Adaptive fastest path computation on a road network: A traffic mining approach. In Proc. 2007 Int. Conf. on Very Large Data Bases, pages 794\textendash805, 2007.
\bibitem{11} Yong Tan. Solve for shortest paths problem within logarithm runtime. CoRR, abs/1901.07463, 2019.
\bibitem{12} F. Benjamin Zhan. Three fastest shortest path algorithms on real road networks: Data structures and procedures. Journal of Geographic Information and Decision Analysis, 1(1):70\textendash82, 1997.
\bibitem{13} Brian C. Dean. Continues-time dynamic shortest path algorithms. Master's Thesis, MIT, 1999.
\bibitem{14} Andrew V. Goldberg, Haim Kaplan, and Renato
F. Wernek. Reach for $A^\ast$: Efficient point-to-point shortest path algorithms. Society for Industrial and Applied Mathematics, SIAM Workshop on Algorithms Engineering and Experimentation (ALENEX 06), January 2006. Technical Report for CLASSiC FP7 European project. 
\bibitem{15} Peter Sanders and Dominik Schultes. Engineering highway hierarchies. ESA 2006, of Lecture Notes in Computer Science, 4168:804\textendash816, 2006.
\bibitem{16} Bolin Ding, Jeffrey Xu Yu, and Lu Qin. Finding time-dependent shortest paths over large graph. Processing of 11th International Conference on Extending Database Technology: Advances in Database Technology, 12:205\textendash216, 2008.
\end{thebibliography}
\end{document}